\documentclass[a4paper,11pt]{amsart}
\usepackage{amssymb}
\usepackage{amsmath}
\usepackage{amsfonts}
\usepackage{pstricks,pst-plot,pst-node,psfrag,graphicx,pst-grad,amsfonts}
\usepackage{float}
\usepackage{multirow,enumerate,booktabs}
\newtheorem{theorem}{Theorem}
\usepackage{anysize}
\marginsize{1in}{1in}{1in}{1in}
\begin{document}
\bibliographystyle{amsplain}
\title{The inverse Lindley distribution: A stress-strength reliability model}
\author[Vikas Kumar Sharma]{Vikas Kumar Sharma}
\email{vikasstats@rediffmail.com}
\author[Sanjay Kumar Singh]{Sanjay Kumar Singh}
\author[Umesh Singh]{Umesh Singh}
\author[Varun Agiwal]{Varun Agiwal}
\address{Vikas Kumar Sharma, Sanjay Kumar Singh \& Umesh Singh
\newline \indent Department of Statistics and DST-CIMS,
\newline \indent Banaras Hindu University,
\newline \indent Varanasi-221005, India,
\newline \indent Correspondence to: vikasstats@rediffmail.com}
\address{Vikas Kumar Sharma \& Varun Agiwal
\newline \indent Department of Statistics,
\newline \indent Central University of Rajasthan, Bandersindri
\newline \indent Ajmer-00000, India}

\begin{abstract}
In this article, we proposed an inverse Lindley distribution and studied its fundamental properties such as quantiles, mode, stochastic ordering and entropy measure. The proposed distribution is observed to be a heavy-tailed distribution and has a upside-down bathtub shape for its failure rate. Further, we proposed its applicability as a stress-strength reliability model for survival data analysis. The estimation of stress-strength parameters and $R=P[X>Y]$, the stress-strength reliability has been approached by both classical and Bayesian paradigms. Under Bayesian set-up, non-informative (Jeffrey) and informative (gamma) priors are considered under a symmetric (squared error) and a asymmetric (entropy) loss functions, and a Lindley-approximation technique is used for Bayesian computation. The proposed estimators are compared in terms of their mean squared errors through a simulation study. Two real data sets representing survival of Head and Neck cancer patients are fitted using the inverse Lindley distribution and used to estimate the stress-strength parameters and reliability. \\ 
\textbf{\underline{Keywords:}} Inverse Lindley distribution, Stress-strength reliability model, Statistical properties, Maximum likelihood estimator and Bayes estimator \vspace{.5cm}\\
\textbf{\underline{AMS 2001 Subject Classification:}} 60E05, 62F10, 62F15
\end{abstract}
\maketitle
\section{Introduction}
\cite{Lindley58} introduced a mixture of exponential$(\theta)$ and gamma$(2,\theta)$ distributions with mixing proportion is $\left[\theta/(\theta+1)\right]$, in the context of Bayesian statistics, as a counter example of fudicial statistics. This mixture is called the Lindley distribution and defined by the following probability density and survival functions,
\begin{equation}
f\left(y;\theta\right)=\frac{\theta^2}{\left(1+\theta \right) }\left(1+y \right) e^{-\theta y}, y,\theta > 0 
\label{eq1}
\end{equation}
\begin{equation}
S\left(y;\theta\right)=1- F\left(y;\theta \right)=1-\left( 1+\frac{\theta y}{1+\theta}\right) e^{-\theta y},\theta>0, y>0 
\label{eq2}
\end{equation}
\cite{Ghitny08} investigated the mathematical and statistical properties of Lindley distribution. They have shown that this probability model can be a better model than the well-known exponential distribution in some particular cases.
Since then, this distribution is becoming increasing popular for modeling lifetime data and has been discussed by many authors in different context such as censoring (\cite{Krishna11,SinghSharma14,Singh13}), stress-strength model (\cite{Mutairi13}), load-sharing system model (\cite{SinghGupta12}), competing risk model (\cite{Mazucheli11}) and Bayes estimation (\cite{Sajid}).

Since, the Lindley distribution is only applicable of modelling to the monotonic increasing hazard rate data, its applicability is restricted to the data that show non-monotone shapes (bathtub and upside-down bathtub) for their hazard rate. Though various article have found in literature that address the analysis of bathtub shape data, the limited attention has been paid to the analysis of upside-down bathtub shape data. Very recently, \cite{Sharma14} proposed a lifetime model with upside-down bathtub shape hazard rate function that is capable of modelling many real problems, for example failure of washing machines, survival of Head and Neck cancer patients, and survival of patients with breast cancer, see the references cited in \cite{Sharma14}.

Considering the fact that all inverse distribution possess the upside-down bathtub shape for their hazard rates, we, in this article, proposed a inverted version of the Lindley distribution that can be effectively used to model the upside-down bathtub shape hazard rate data. If a random variable $Y$ has a Lindley distribution LD($\theta$), then the random variable $X=(1/Y)$ is said to be follow the inverse Lindely distribution having a scale parameter $\theta$ with its probability density function (Pdf), defined by
\begin{equation}
f(x;\theta)=\frac{\theta^{2}}{1+\theta}\left( {\frac{1+x}{x^{3}}}\right){e^{\frac{-\theta}{x}}};x>0,\theta>0
\label{eq3}
\end{equation}
It is denoted by ILD$(\theta)$. The cumulative distribution function (Cdf) of inverse Lindley distribution is given by
\begin{equation}
F(x;\theta)=\left[ 1+\frac{\theta}{1+\theta}{\frac{1}{x}}\right]{e^{\frac{-\theta}{x}}};x>0,\theta>0
\label{eq4}
\end{equation}
Since this continuous distribution has the nice closed form expressions for the Cdf, hazard function as well as stress-strength reliability, its pertinency for survival analysis can never be denied in the literate.

The aim of this paper is two fold. The fist aim is to study the properties of the inverse Lindley distribution. The second aim of this paper is to develop the inferential procedure of the stress-strength reliability $R = P(Y < X)$, when $X$ represents the strength and $Y$ denotes the stress, both are independent inverse Lindley ($\theta_{1}$) and inverse Lindley ($\theta_{2}$), random variables respectively. The stress-strength reliability plays a very important role in reliability analysis and has nice probabilistic interpretation. As stated by \cite{Birnbaum13}, $R = P(Y < X)$ is the probability that failure will occur because, due to chance, a component with relatively low strength was paired off with a high stress. Many authors developed the estimation procedures for estimating the stress-strength reliability under both classical and Bayesian set-up, see \cite{Mutairi13,Ghitany13} and references cited therein.

The rest of the paper has been organized in the following sections. Section 2 provides comprehensive mathematical treatments to study the statistical properties of inverse Lindley distribution. The properties studied include: shapes of the
probability density function and the hazard rate function, mode, median, quantile function stochastic orderings, Reiny entropy, order statistics and their moments, and stress-strength reliability measures. The maximum likelihood estimators of the unknown parameters and stress-strength reliability along with their asymptotic distributions are presented in Section 3. Bayes estimators obtained by the method of Lindley approximation are discussed under Jeffrey and gamma priors using a symmetric, squared error loss and a symmetric, entropy loss functions in section 4. Section 5 presents the simulation study to study the behaviour of the proposed estimators. Section 6 illustrate an application by using the real data sets. Section 7 provides final concluding remarks.

\section{Properties of inverse Lindley distribution}
\subsection{Shape of density and hazard functions}
The first derivative of (\ref{eq3}) is given by
\begin{align*}
\frac{d}{dx}f(x)=-\frac{\theta^{2}}{1+\theta}\frac{e^{-\frac{\theta}{x}}}{x^{5}}\left(2x^{2}-\left(\theta-3 \right)x-\theta \right)
\end{align*}
and $\frac{d}{dx}f(x)\vert_{x=\mathcal{M}_{0}}=0$, where $\mathcal{M}_{0}$ is the mode of an inverse Lindley random variable and is given by
\begin{equation*}
\mathcal{M}_{0}=\frac{\theta-3+\sqrt{\left(\theta-3 \right)^{2}+8\theta }}{4}
\label{}
\end{equation*}
Figure (\ref{fig1}) shows the various shapes of inverse Lindley density for different choices of $\theta$ and also indicates that the density function of inverse Lindley density is uni-model in $x$.
The hazard function of ILD$(\theta)$ can be readily obtained as
\begin{equation}
h_{X}(x)=\dfrac{f_{X}(x)}{1-F_{X}(x)} =\frac{\theta^{2}\left(1+x\right)}{x^{2}\left[\theta+x\left(1+\theta\right)\left(e^{-\frac{\theta}{x}}-1 \right)\right] }
\end{equation}
\begin{figure}
\centering \includegraphics[scale=0.75]{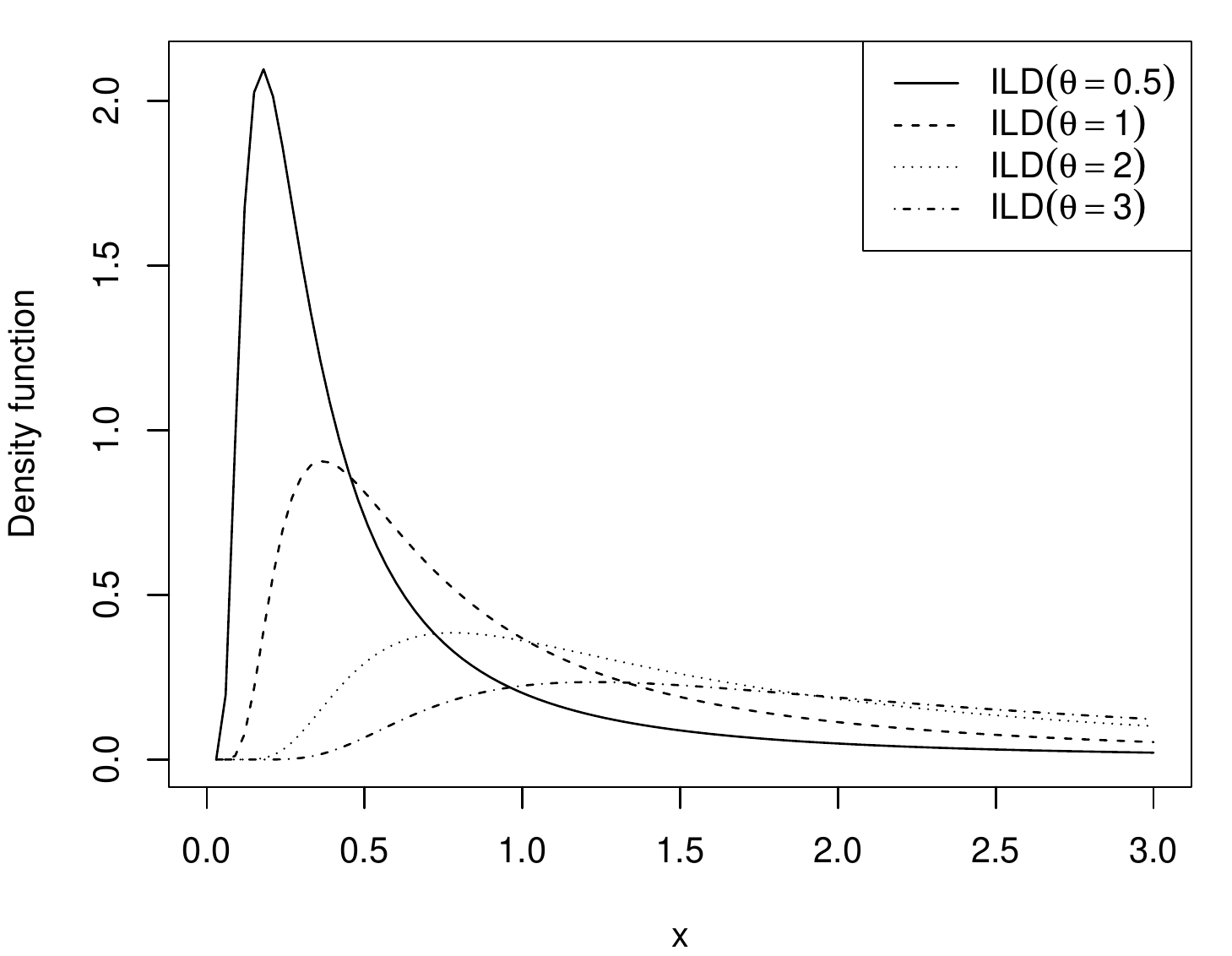}
\caption{Probability density functions of ILD$(\theta)$ for different values of $\theta$.}
\label{fig1}
\end{figure}
\begin{figure}
\centering\includegraphics[scale=0.75]{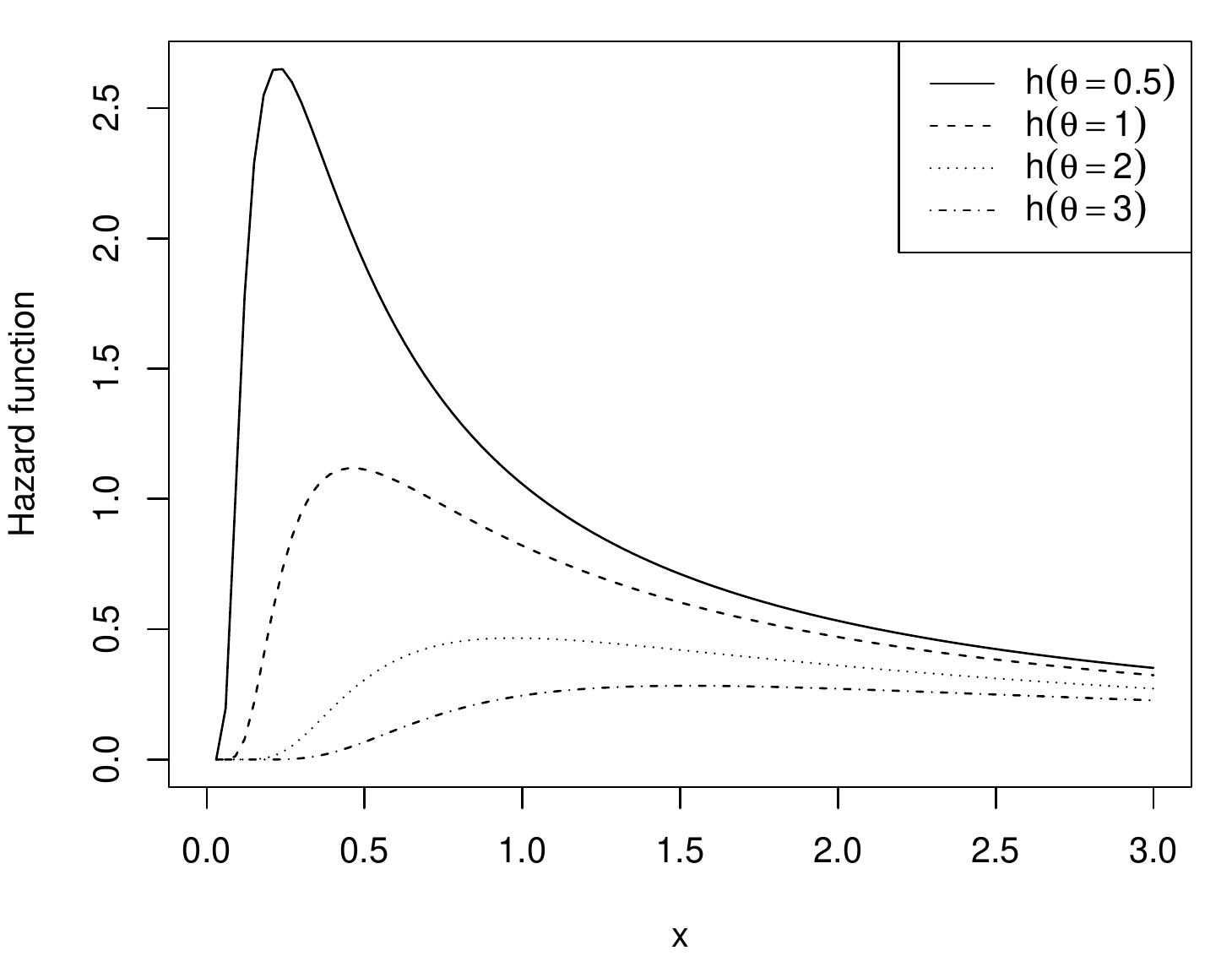}
\caption{Hazard functions of ILD$(\theta)$ for different values of $\theta$.}
\label{fig2}
\end{figure}
The monotonicity of hazard function of ILD$(\theta)$, for different values of the parameter $\theta$, has been showed graphically in Figure (\ref{fig2}). Clearly, the hazard function of inverse Lindley distribution is also a uni-model function in $x$ and achieve its maximum value at epoch, $x_{0}$, which can be obtained as the solution of the following equation
\begin{equation}
\left(1+\theta \right)e^{\frac{\theta}{x_{0}}}\left[\theta+\left(\theta-3 \right)x_{0}-2x^{2}_{0}\right]+x_{0}\left[4\theta+2\left(\theta+1 \right)x_{0}+3  \right]+2\theta=0
\end{equation}
\subsection{Quantile function}
Though formulas for the mean and the variance are difficult to obtain explicitly since the explicit algebraic expressions for involved integrals are not available, the quantiles are easy to evaluate. Let $T$ be an arbitrary random variable with cumulative distribution function $F_{T}(t)$=$P\left(T\le t\right)$, where t$\in \Re$. For any $p \in (0,1)$, the $p$th quantile $Q\left(p \right)$ of T is usually obtained as
\begin{equation}
F_{T}\left(Q\left(p \right)\right)=p
\label{eq:qp}
\end{equation}
\begin{theorem}
For any $\theta>0$, the quantile function of the inverse Lindley distribution $Q\left(p\right)$ is
\begin{equation}
Q\left(p\right)=-\left[1+\frac{1}{\theta}+\frac{1}{\theta}W_{-1}\left( -p(1+\theta)e^{-\left(1+\theta \right) }\right)\right]^{-1};~p\in (0,1),
\label{eq5}
\end{equation}
where $W_{-1}$ denotes the negative branch of the Lambert W function.
\end{theorem}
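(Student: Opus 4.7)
The plan is to solve the equation $F(Q;\theta)=p$ directly by manipulating it into the canonical Lambert W form $we^{w}=z$, and then to invert using the appropriate branch.

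Starting from equation (\ref{eq:qp}) with $F$ as in (\ref{eq4}), I would write
\begin{equation*}
\left(1+\frac{\theta}{(1+\theta)Q}\right)e^{-\theta/Q}=p.
\end{equation*}
The key algebraic trick is to multiply both sides by $-(1+\theta)\,e^{-(1+\theta)}$. The left-hand side then becomes
\begin{equation*}
-(1+\theta)\left(1+\frac{\theta}{(1+\theta)Q}\right)\exp\!\left(-(1+\theta)\left(1+\frac{\theta}{(1+\theta)Q}\right)\right),
\end{equation*}
since the exponents combine to $-\theta/Q-(1+\theta)=-(1+\theta)\bigl(1+\tfrac{\theta}{(1+\theta)Q}\bigr)$. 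Setting
\begin{equation*}
w:=-(1+\theta)\left(1+\frac{\theta}{(1+\theta)Q}\right)=-\,(1+\theta)-\frac{\theta}{Q},
\end{equation*}
the equation reduces to $w\,e^{w}=-p(1+\theta)e^{-(1+\theta)}$, which is exactly the defining relation of the Lambert $W$ function.

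The second step is to identify the correct branch. Because $Q>0$ and $\theta>0$, the quantity $w$ satisfies $w<-(1+\theta)\le -1$, so $w$ lies in the range of the lower branch $W_{-1}$, not the principal branch. One also has to check that the argument lies in $[-1/e,0)$ so that $W_{-1}$ is real valued; this follows from $p\in(0,1)$ together with the elementary bound $xe^{-x}\le 1/e$ applied at $x=1+\theta$, giving $\bigl|-p(1+\theta)e^{-(1+\theta)}\bigr|\le 1/e$. Hence
\begin{equation*}
w=W_{-1}\!\left(-p(1+\theta)e^{-(1+\theta)}\right).
\end{equation*}

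Finally I would solve the linear relation $w=-(1+\theta)-\theta/Q$ for $Q$, obtaining
\begin{equation*}
Q=-\left[1+\frac{1}{\theta}+\frac{w}{\theta}\right]^{-1},
\end{equation*}
which is precisely (\ref{eq5}). The only genuinely subtle point in the whole argument is the branch selection for the Lambert $W$, so I would devote a sentence to the inequality $w\le -1$; the rest is purely mechanical algebra built around the clever multiplicative factor $-(1+\theta)e^{-(1+\theta)}$ that produces the $we^{w}$ structure.
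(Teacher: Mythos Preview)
Your proposal is correct and follows essentially the same route as the paper: manipulate $F(Q;\theta)=p$ into the form $we^{w}=-p(1+\theta)e^{-(1+\theta)}$ via the multiplicative factor $-(1+\theta)e^{-(1+\theta)}$, then invoke the $W_{-1}$ branch because $w<-1$. If anything, your treatment is slightly more careful than the paper's, since you explicitly verify that the argument $-p(1+\theta)e^{-(1+\theta)}$ lies in $[-1/e,0)$ via the bound $xe^{-x}\le 1/e$, which is needed for $W_{-1}$ to return a real value.
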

\begin{proof}
For any fixed $\theta> 0$, let $p\in (0, 1)$, we have to solve the equation (\ref{eq:qp}) with respect to $Q\left(p \right)$, for $Q\left(p \right)> 0$. From $(\ref{eq3})$, we have to solve the following equation:
\begin{equation}
\left(1+\theta+\frac{\theta}{Q\left(p \right)} \right) e^{-\frac{\theta}{Q\left(p \right)}}=\left(1+\theta \right)p 
\label{eq:qp1}
\end{equation}
Multiplying both side by $-e^{-(1+\theta)}$ in $(\ref{eq:qp1})$, we obtain
\begin{equation}
-\left(1+\theta+\frac{\theta}{Q\left(p \right)} \right) e^{-\left(1+\theta +\frac{\theta}{Q\left(p \right)}\right) }=-u\left(1+\theta \right)e^{-(1+\theta)}
\label{eq6}
\end{equation}
From $(\ref{eq6})$ together with $(\ref{eq3})$, we see that $-\left(1+\theta+\frac{\theta}{Q\left(p \right)} \right)$ is the Lambert W function of the real argument $-p\left(1+\theta \right)e^{-(1+\theta)}$ (The Lambert W function is a multivalued complex function defined as the solution of the equation $W(z)\exp\left(W(z)\right)=z$, where z is a complex number, see \cite{jorda10}). Then, we have
\begin{equation}
W\left(-u\left(1+\theta \right)e^{-(1+\theta)} \right) =-\left(1+\theta+\frac{\theta}{Q\left(p \right)} \right)
\label{eq7}
\end{equation}
Moreover, for any $\theta> 0$ and $x > 0$, it is immediate that $\left(1+\theta+\frac{\theta}{Q\left(p \right)} \right)>1$ and it can also be checked that $-p\left(1+\theta \right)e^{-(1+\theta)}\in(−\infty, -1)$ since $0<p<1$. Therefore, by taking into account the properties of the negative branch of the Lambert W function, $(\ref{eq7})$ becomes
\begin{equation}
W_{-1}\left(-p\left(1+\theta \right)e^{-(1+\theta)} \right) =-\left(1+\theta+\frac{\theta}{Q\left(p \right)} \right)
\label{eq8}
\end{equation}
which in turn implies the result. 
i.e.
\begin{equation}
Q\left(p \right)=-\left[1+\frac{1}{\theta}+\frac{1}{\theta}W_{-1}\left(-p(1+\theta)e^{-\left(1+\theta \right) }\right)\right]^{-1}
\end{equation}
\end{proof} 
For $p=0.5$, the median of inverse Lindley distribution is
\begin{equation}
\text{Median}(\tilde{\mu})= -\left[1+\frac{1}{\theta}+\frac{1}{\theta}W_{-1}\left( -0.5(1+\theta)e^{-\left(1+\theta \right) }\right)\right]^{-1}
\label{eq9}
\end{equation} 
\subsection{Stochastic ordering}
A random variable $X$ is said to be stochastically greater $\left(Y\leq_{st} X \right)$ than $Y$ if $F_{X}(x)\leq F_{Y}(x)$ for all $x$. In the similar way, $X$ is said to be stochastically greater $\left(X\leq_{st} Y \right)$ than $Y$ in the
\begin{enumerate}[(i)]
\item stochastic order$\left(X \leq_{st}Y \right)$ if $F_{X}(x) \geq F_{Y}(x)$ for all x.
\item hazard rate order $\left(X \leq_{hr}Y \right)$ if $h_{X}(x) \geq h_{Y}(x)$ for all x.
\item mean residual life order $\left(X \leq_{mrl}Y \right)$ if $m_{X}(x) \geq m_{Y}(x)$ for all x.
\item likelihood ratio order $\left(X \leq_{lr}Y \right)$ if $\left(\frac{f_{X}(x)}{f_{Y}(x)} \right) $ decreases in x.
\end{enumerate}
The following implications (\cite{ordering}) are well known
\begin{center}
$\left(X \leq_{lr}Y \right)\implies \left(X \leq_{hr}Y \right) \implies \left(X \leq_{mrl}Y \right)$\\
$\Downarrow$ \\
$ \left(X \leq_{st}Y \right)$
\end{center}
The inverse Lindley distributions are ordered with respect to the strongest likelihood ratio ordering as shown in the following theorem.
\begin{theorem}
Let $X \sim ILD(\theta_{1})$ and $Y \sim ILD(\theta_{2})$. If $\theta_{1}>\theta_{2}$, then $\left(X \leq_{lr}Y \right)$ and hence $\left(X \leq_{hr}Y \right), \left(X \leq_{mrl}Y \right)$ and $\left(X \leq_{hr}Y \right)$.
\end{theorem}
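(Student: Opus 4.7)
My plan is to establish the likelihood ratio ordering in item (iv) directly from the densities, and then invoke the implication chain stated above the theorem to deduce the remaining three orderings for free. This reduces the whole proof to a single monotonicity check on $f_X(x)/f_Y(x)$.

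First, I would substitute the inverse Lindley density (\ref{eq3}) for $f_X$ (with parameter $\theta_1$) and $f_Y$ (with parameter $\theta_2$) and observe that the parameter-free factor $(1+x)/x^3$ cancels, leaving
\begin{equation*}
\frac{f_X(x)}{f_Y(x)} \;=\; \frac{\theta_1^2\,(1+\theta_2)}{\theta_2^2\,(1+\theta_1)}\,\exp\!\left(\frac{\theta_2-\theta_1}{x}\right).
\end{equation*}
The prefactor is a positive constant, so monotonicity of the ratio is controlled entirely by the exponential factor. A logarithmic derivative then gives
\begin{equation*}
\frac{d}{dx}\log\frac{f_X(x)}{f_Y(x)} \;=\; \frac{\theta_1-\theta_2}{x^2},
\end{equation*}
which has a definite sign on $(0,\infty)$ under the hypothesis $\theta_1>\theta_2$. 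Reading off the implied monotonicity and matching it against the convention in (iv) delivers the likelihood ratio ordering of $X$ and $Y$, after which the hazard rate, mean residual life, and usual stochastic orderings follow at once from the implication diagram cited from the ordering reference.

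There is essentially no analytic obstacle here: the proof is one algebraic simplification followed by one derivative. The point that most deserves care, and the place where a slip is most likely, is the \emph{direction} of the inequality. The sign of $\theta_2-\theta_1$ decides whether $f_X/f_Y$ is monotone increasing or monotone decreasing in $x$, which in turn dictates whether the theorem's conclusion reads $X\leq_{lr}Y$ or $Y\leq_{lr}X$ under $\theta_1>\theta_2$; one should also verify, as a sanity check, that the resulting stochastic ordering is consistent with the fact that an ILD$(\theta)$ variable is the reciprocal of a Lindley$(\theta)$ variable, so that larger $\theta$ produces stochastically larger values on the inverted scale.
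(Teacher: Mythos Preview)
Your approach is essentially identical to the paper's: form the ratio $f_X/f_Y$, note that the $(1+x)/x^3$ factor cancels to leave a positive constant times $e^{-(\theta_1-\theta_2)/x}$, differentiate to determine monotonicity, and then invoke the implication chain from \cite{ordering}. The only cosmetic difference is that the paper differentiates the ratio directly (obtaining $\frac{d}{dx}\frac{f_X}{f_Y}=\frac{\theta_1-\theta_2}{x^2}\frac{f_X}{f_Y}$) whereas you take the logarithmic derivative; your explicit caution about the \emph{direction} of the ordering is well placed, since the paper's own wording at that step is internally inconsistent (a positive derivative is asserted to yield a decreasing ratio).
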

\begin{proof}
 First note that 
\begin{equation}
\begin{split}
\frac{f_{X}(x)}{f_{Y}(x)}&=\frac{\theta_{1}^{2}\left( 1+\theta_{2}\right)e^{-\frac{\theta_{1}}{x}}}{\theta_{2}^{2}\left( 1+\theta_{1}\right)e^{-\frac{\theta_{2}}{x}}} \\
&=\frac{\theta_{1}^{2}\left( 1+\theta_{2}\right)}{\theta_{2}^{2}\left( 1+\theta_{1}\right)}e^{-\frac{\left(\theta_{1}-\theta_{2}\right) }{x}}; x>0\\
\end{split}
\end{equation}
Since, for $\theta_{1}>\theta_{2}$
\begin{equation}
\frac{d}{dx}\frac{f_{X}(x)}{f_{Y}(x)}=\frac{\left(\theta_{1}-\theta_{2}\right)}{x^{2}}\frac{f_{X}(x)}{f_{Y}(x)}>0
\label{eq11}
\end{equation}
that is $\eqref{eq11}$ is a positive monotone decreasing function, hence  
$\frac{f_{X}(x)}{f_{Y}(x)}$ is decreasing in x. That is $X \leq_{lr}Y $. The remaining statements follow from the above relationship.
\end{proof}
\subsection{Entropy measure}
Entropy measures provide important tools to indicate variety in distributions at particular moments in time and to analysis evolutionary processes over time. This measures the variation of the uncertainty in the distribution of a random variable $X$. For a probability distribution, \cite{ref8} gave an expression of the entropy function, so called Renyi entropy, defined by
\begin{equation}
\tau_{R}(\gamma)= \frac{1}{1-\gamma}\log\left\lbrace\int f^{\gamma}(x)dx \right\rbrace \
\end{equation}
where $\gamma>0$ and $\gamma \neq 1$.  For the inverse Lindley distribution in $(\ref{eq2})$, note that
\begin{equation}
\tau_{R}(\gamma)=\frac{1}{1-\gamma}\log \int\limits_{0}^{\infty}\frac{\theta^{2\gamma}}{\left(1+\theta\right)^{\gamma} }\left({\frac{\left( 1+x\right)^{\gamma} }{x^{3\gamma}}}\right){e^{\frac{-\theta\gamma}{x}}}dx
\end{equation}
We known, $(1+z)^{j}$=$\sum\limits_{j=0}^{\infty}\binom{\gamma}{j} z^{j}$ and $\int\limits_{0}^{\infty}e^{-b/x}x^{-a-1}dx=\Gamma\left(a\right)/b^{a}$, then 
\begin{equation}
\begin{split}
\tau_{R}(\gamma)&=\frac{1}{1-\gamma}\log\left[ \frac{\theta^{2\gamma}}{\left(1+\theta\right)^{\gamma}}\sum\limits_{j=0}^{\infty}
\binom{\gamma}{j} \int\limits_{0}^{\infty}\frac{e^{\frac{-\theta\gamma}{x}}}{x^{3\gamma-j}}dx\right] \\
&=\frac{1}{1-\gamma}\log\left[ \frac{\theta^{2\gamma}}{\left(1+\theta\right)^{\gamma}}\sum\limits_{j=0}^{\infty}\binom{\gamma}{j} \frac{\Gamma\left({3\gamma-j-1}\right) }{\left(\theta\gamma\right) ^{3\gamma-j-1}}\right] 
\end{split}
\end{equation}
\section{Stress-strength reliability and maximum likelihood estimation}
Let Y and X be independent stress and strength random variables follow inverse Lindley distribution with parameter $\theta_{1}$ and $\theta_{2}$ respectively. Then, the stress-strength reliability $R$ is define as
\begin{equation}
\begin{split}
R&=P\left[Y<X \right]=\int\limits_{0}^{\infty}P\left[Y<X|X=x\right]f_{X}(x)dx=\int\limits_{0}^{\infty}f\left( x,\theta_{1} \right)F\left( x,\theta_{2} \right)dx\\
&=\int\limits_{0}^{\infty}\left[\left(1+\left(\frac{\theta_{2}}{1+\theta_{2}}\frac{1}{x} \right)  \right) e^{-\frac{\theta_{2}}{x}} \right]\left(\frac{\theta_{1}^{2}}{1+\theta_{1}} \right)\left( \frac{1+x}{x^{3}}\right)e^{-\frac{\theta_{1}}{x}} dx\\
&=\left(\frac{\theta_{1}^{2}}{1+\theta_{1}} \right)\int\limits_{0}^{\infty}\left( \frac{1+x}{x^{3}}\right)e^{-\frac{\theta_{1}+\theta_{2}}{x}}dx + \frac{\theta_{1}^{2}\theta_{2}}{\left(1+\theta_{1} \right)\left( 1+\theta_{2}\right)}\int\limits_{0}^{\infty}\frac{1}{x} \left( \frac{1+x}{x^{3}}\right)e^{-\frac{\theta_{1}+\theta_{2}}{x}}dx\\
&=\left(\frac{\theta_{1}^{2}}{1+\theta_{1}} \right)\int\limits_{0}^{\infty}\left( \frac{1+x}{x^{3}}\right)e^{-\frac{\theta_{1}+\theta_{2}}{x}}dx + \frac{\theta_{1}^{2}\theta_{2}}{\left(1+\theta_{1}\right)\left( 1+\theta_{2}\right)}\left[\int\limits_{0}^{\infty}x^{-4}e^{-\frac{\theta_{1}+\theta_{2}}{x}}dx+\int\limits_{0}^{\infty}x^{-3}e^{-\frac{\theta_{1}+\theta_{2}}{x}}dx \right] 
\end{split}
\end{equation}
Using the definitions of ILD and inverse gamma distributions, we get the expression for stress-strength reliability as
\begin{equation}
\begin{split}
R&=\left(\frac{\theta_{1}^{2}}{1+\theta_{1}}\right)\left( \frac{1+\theta_{1}+\theta_{2}}{\left(\theta_{1}+\theta_{2}\right)^{2}}\right)+ \frac{\theta_{1}^{2}\theta_{2}}{\left(1+\theta_{1}\right)\left( 1+\theta_{2}\right)}\left[\frac{\Gamma 3}{\left(\theta_{1}+\theta_{2}\right)^{3}}+\frac{\Gamma 2}{\left(\theta_{1}+\theta_{2}\right)^{2}}\right] \\
R&=\frac{\theta_{1}^{2}\left[ \left(\theta_{1}+\theta_{2} \right)^{2}\left(1+\theta_{2} \right)+\left( \theta_{1}+\theta_{2}\right)\left( 1+2\theta_{2}\right)+2\theta_{2} \right] }{\left(1+\theta_{1} \right)\left( 1+\theta_{2}\right)\left( \theta_{1}+\theta_{2}\right)^{3}}
\end{split}
\label{eq:R}
\end{equation}
Since $R$ in (\ref{eq:R}) is a function of stress-strength parameters $\theta_{1}$ and $\theta_{2}$, we need to obtain the maximum likelihood estimators (MLEs) of $\theta_{1}$ and $\theta_{2}$ to compute the MLE of $R$ under invariance property of the MLE.

Suppose $X_{1},X_{2},...,X_{n}$ is a strength random sample from ILD$\left( \theta_{1}\right) $ and $Y_{1},Y_{2},...,Y_{m}$ is a stress random sample from ILD$\left( \theta_{2}\right) $ distributions. Thus, the likelihood function based on the observed sample is given by 
\begin{equation}
\begin{split}
L(\theta_{1},\theta_{2}\vert \textbf{\underbar x}, \textbf{\underbar y})
&=\frac{\theta_{1}^{2n}\theta_{2}^{2m}}{\left( 1+\theta_{1}\right)^{n}\left( 1+\theta_{2}\right)^{m} }\prod\limits_{i=1}^{n}\left({\frac{1+x_{i}}{x_{i}^{3}}}\right)\prod\limits_{j=1}^{m}\left({\frac{1+y_{j}}{y_{j}^{3}}}\right){\exp\left\lbrace -\left(\theta_{1} s_{1}+ \theta_{2} s_{2}\right) \right\rbrace }
\end{split}
\end{equation}
where, $s_{1}=\sum\limits_{i=1}^{n}\frac{1}{x_{i}}$ \text{~and~} $s_{2}=\sum\limits_{j=1}^{m}\frac{1}{y_{j}}$. The log-likelihood function is given by
\begin{equation}
\begin{split}
\log L(\theta_{1},\theta_{2})&=2n\log\theta_{1}+2m\log\theta_{2}-n\log\left( 1+\theta_{1}\right)-m\log\left( 1+\theta_{2}\right) -\theta_{1} s_{1}-\theta_{2} s_{2}\\
&+\sum\limits_{i=1}^{n}\log\left( \frac{1+x_{i}}{x_{i}}\right)+\sum\limits_{j=1}^{m}\log\left( \frac{1+y_{j}}{y_{j}}\right)
\end{split}
\end{equation}
The MLE of $\theta_{1}$ and $\theta_{2}$, say $\hat{\theta}_{1}$ and $\hat{\theta}_{2}$ respectively, can be obtained as the solutions of the following equations
\begin{equation}
\frac{\partial\log L}{\partial\theta_{1}}=\frac{2n}{\theta_{1}}-\frac{n}{1+\theta_{1}}-s_{1}
\label{eq12}
\end{equation}
\begin{equation}
\frac{\partial\log L}{\partial \theta_{2 }}=\frac{2m}{\theta_{2}}-\frac{m}{1+\theta_{2}}-s_{2}
\label{eq13}
\end{equation}
From $\eqref{eq12}$ and $\eqref{eq13}$, obtain MLEs as 
\begin{equation}
\hat{\theta}_{1}=-\frac{\left(s_{1}-n \right)+\sqrt{\left(s_{1}-n \right)^{2}+8ns_{1} } }{2s_{1}}
\end{equation}
\begin{equation}
\hat{\theta}_{2}=-\frac{\left(s_{2}-m \right)+\sqrt{\left(s_{2}-m \right)^{2}+8ms_{2} } }{2s_{2}}
\end{equation}
Hence, using the invariance property of the MLE, the maximum likelihood estimator $\hat{R}_{mle}$ of R can be obtained by substituting $\hat{\theta}_{k}$ in place of $\theta_{k}$ for k = 1, 2.
\begin{equation}
\hat{R}_{mle}= \frac{\theta_{1}^{2}\left[ \left(\theta_{1}+\theta_{2} \right)^{2}\left(1+\theta_{2} \right)+\left( \theta_{1}+\theta_{2}\right)\left( 1+2\theta_{2}\right)+2\theta_{2} \right] }{\left(1+\theta_{1} \right)\left( 1+\theta_{2}\right)\left( \theta_{1}+\theta_{2}\right)^{3}}\Bigg\vert_{\theta_{k}=\hat{\theta}_{k},k=1,2}
\end{equation}
\subsection{Asymptotic distribution and confidence intervals for $\theta_{1}$, $\theta_{2}$ and $R$}
In this subsection, we derived the asymptotic distributions of $\hat{\theta}_{k}$ as well as of $\hat{R}_{mle}$. Based on the asymptotic distribution of $\hat{R}_{mle}$, we also obtained the asymptotic confidence interval for R. 
For large sample, an estimator is said to be asymptotic efficient for estimating $\theta_{k}$ if 
\begin{equation}
\sqrt{n}\left(\hat{\theta}_{k} -\theta_{k}\right)\rightarrow N\left( 0, I\left( \theta_{k}\right)^{-1} \right) 
\end{equation}
where, 
\begin{equation*}
I\left( \theta_{k}\right)=-E\left(\frac{\partial^{2}f(x)}{\partial\theta_{k}^{2}} \right) = \frac{\theta_{k}^{2}+4\theta_{k}+2}{\theta_{k}^{2}\left( 1+\theta_{k}\right)^{2}} ;~~k=1,2
\end{equation*}
By using delta method, the Fisher information matrix of $\Theta=(\theta_{1},\theta_{2})$ is defined as
$$I\left(\Theta\right)= -
\begin{pmatrix}
E\left(\frac{\partial^{2}L}{\partial\theta^{2}_{1}}\right) & E\left(\frac{\partial^{2}L}{\partial\theta_{1}\partial\theta_{2}}\right)\\
E\left(\frac{\partial^{2}L}{\partial\theta_{2}\partial\theta_{1}}\right)& E\left(\frac{\partial^{2}L}{\partial\theta^{2}_{2}}\right)
\end{pmatrix}=
\begin{pmatrix}
I_{11}& 0\\
0 & I_{22}
\end{pmatrix}$$
The asymptotic $100\left(1-\alpha \right)\%$ confidence intervals for $\theta_{k}, k=1,2$ are defined by
$$\left\lbrace \hat{\theta}_{k}\mp z_{\alpha/2}\sqrt{\frac{\hat{\theta}_{k}^{2}\left( 1+\hat{\theta}_{k}\right)^{2}}{\hat{\theta}_{k}^{2}+4\hat{\theta}_{k}+2} } \right\rbrace $$ 
where, $z_{\alpha/2}$ is the upper $(\alpha/2)$th percentile of a standard normal variable. Similarly, the asymptotic distribution of stress-strength reliability as $n \rightarrow\infty$ and $m\rightarrow\infty$ is given by
\begin{equation}
\frac{\hat{R}-R}{\sqrt{\frac{R_{1}^{2}}{nI_{11}}+\frac{R_{2}^{2}}{m I_{22}}}}\rightarrow N\left(0,1 \right) 
\end{equation}

where,
\begin{equation*}
\begin{split}
&R_{1}=\frac{dR}{d\theta_{1}}=\dfrac{\theta_{1}\theta_{2}^{2}\left[\theta_{1}^{3}+2\theta_{1}^{2}\left(\theta_{2}+3\right)+2\left(\theta_{2}^{2}+3\theta_{2}+3\right)+\theta_{1}\left(\theta_{2}+2\right)\left(\theta_{2}+6\right)\right] }{\left(1+\theta_{1}\right)^{2}\left( 1+\theta_{2}\right)\left( \theta_{1}+\theta_{2}\right)^{4}}\\ 
&R_{2}=\frac{dR}{d\theta_{2}}=\dfrac{-\theta_{1}^{2}\theta_{2}\left[6+\theta_{1}^{2}\left(\theta_{2}+2\right)+\theta_{2}\left(\theta_{2}^{2}+6\theta_{2}+12\right)+2\theta_{1}\left(\theta_{2}+1\right)\left(\theta_{2}+3\right)\right] }{\left(1+\theta_{1}\right)\left( 1+\theta_{2}\right)^{2}\left( \theta_{1}+\theta_{2}\right)^{4}} 
\end{split}
\end{equation*}
Although, $\hat{R}_{mle}$ can be obtained in explicit form, the exact distribution of $\hat{R}$ is difficult to obtain. Due to this reason, we construct the confidence interval for $R$ based on the asymptotic distribution of the maximum likelihood estimator $\hat{R}_{mle}$. Thus, asymptotic $100\left(1-\alpha \right)\%$ confidence interval for R can be easily obtained as 
\begin{equation*}
\left\lbrace  \hat{R}_{mle}\mp z_{\frac{\alpha}{2}}\sqrt{\frac{\hat{R}_{1}^{2}}{nI\left(\hat {\theta}_{1}\right)}+\frac{\hat{R}_{2}^{2}}{mI\left(\hat {\theta}_{2}\right)}} \right\rbrace 
\end{equation*}
where, $I\left(\hat{\theta}_{k}\right)$ and $\hat{R}_{k}, k= 1,~2$ are the MLE's of $I\left( \theta_{k}\right)$ and ${R}_{k}$, respectively.

\section{Bayes estimation}
In this section, we proposed the Bayesian estimators of the unknown parameters $\theta_{1}$, $\theta_{2}$ and stress-strength reliability function $R$. For Bayesian estimation, we need to specify prior distributions, $\pi\left( \theta \right)$ for the model parameters. Here, we consider a non-informative, \cite{JEFFREY61} and an informative gamma priors for the unknown parameters $\theta_{1}$ and $\theta_{2}$. The joint posterior distribution of $\theta_{1}$ and $\theta_{2}$ is defined by
\begin{equation}
\pi \left(\theta_{1},\theta_{2}\vert \textbf{data}\right)= K L\left( \theta_{1},\theta_{2}\vert \textbf{data}\right)  \pi\left( \theta_{1} \right)\pi\left( \theta_{2} \right)
\end{equation}
where, $K^{-1}=\int\limits_{0}^{\infty} \int\limits_{0}^{\infty}  L\left( \theta_{1},\theta_{2}\vert \textbf{data}\right) \pi\left( \theta_{1} \right)\pi\left( \theta_{2} \right) d\theta_{1} d\theta_{2}$.

\cite{Berger58} argued that if no information or less information is available regarding the population parameter, it is better to consider a non-informative prior for the unknown parameter. The joint prior under a methodology given by \cite{JEFFREY61}, is given by 
\begin{equation}
\pi_{j}(\theta_{1},\theta_{2})=\frac{\sqrt{\left( \theta_{1}^{2}+4\theta_{1}+2\right)\left( \theta_{2}^{2}+4\theta_{2}+2\right)}}{\theta_{1}\theta_{2}\left( 1+\theta_{1}\right)\left( 1+\theta_{2}\right) }
\end{equation}
We now suppose some information on the parameters $\theta_{1}$ and $\theta_{2}$ is available. Since both parameters of the inverse Lindely distribution are positive and greater than zero, it is assumed $\theta_{1}$ and $\theta_{2}$ have gamma prior distributions of the following forms.
\begin{equation}
\pi(\theta_{k})=\frac{b_{k}^{a_{k}}}{\Gamma a_{k}}\theta_{k}^{a_{k}-1}e^{-\theta_{k}b_{k}}, k=1,2.
\end{equation}
The joint prior is then given by
\begin{equation}
\pi_{g}(\theta_{1},\theta_{2})\propto\theta_{1}^{a_{1}-1}\theta_{2}^{a_{2}-1}e^{-\left( \theta_{1}b_{1}+\theta_{2}b_{2}\right) }
\end{equation}
Here, all hyper-parameters $a_{1},b_{1},a_{2},b_{2}$ are assumed to be known  that are chosen to reflect the prior information about parameters, based on a prior information available that fits the gamma distribution which is mainly subjective probabilistic. Note, when $a_{1}=b_{1}=a_{2}=b_{2}=0$  then we have non-informative and improper prior.

We, further, considered two loss functions, one is asymmetric known as quadratic loss or squared error loss and other one is asymmetric, entropy loss function. The Bayes estimate of any parametric function, say $\phi \left(\Theta \right)$ under squared error loss function (SELF) and entropy loss function (ELF) are defined by 
\begin{align}
\phi^{\star}_{self}\left(\Theta\vert\textbf{data}\right)&=E\left[\phi\left(\Theta\right)\vert\textbf{data}\right]=K \int\limits_{\Theta}\phi\left(\Theta\right) \pi \left(\Theta\vert \textbf{data}\right) d\Theta \\
\phi^{\star}_{elf}\left(\Theta\vert\textbf{data}\right)&=\left(E\left[\phi^{-1}\left(\Theta\right)\vert\textbf{data}\right] \right) ^{-1}=\left( K \int\limits_{\Theta}\phi^{-1}\left(\Theta\right)\pi \left(\Theta\vert \textbf{data }\right) d\Theta\right)^{-1} 
\end{align}
It is to be noticed here that a major difficulty in the implementation of Bayes procedure is the evaluation of the ratio of two integrals for which closed expression is not easy to obtain analytically. Therefore, we proposed the use of an approximation technique suggested by \cite{Lindley1980} to solve Bayesian problems. Many authors including (\cite{Singh2008}) have used this technique to obtain the Bayes estimates of the parameters of various lifetime distributions.
\subsection{Lindley-approximation}
Consider that the posterior expectation of $\phi\left(\Theta\vert\textbf{data}\right)$ is expressible in the form of ratio of integrals as given below
\begin{equation}
\phi^{\star}\left(\Theta\vert\textbf{data}\right)=\frac{\int\limits_{\Theta}\phi\left(\Theta\right)\exp\left[\log L\left(\Theta\right)+\rho\left(\Theta \right)\right]d\Theta}{\int\limits_{\Theta}\exp\left[\log L\left(\Theta\right)+\rho\left(\Theta\right)\right]d\Theta}
\label{LDR}
\end{equation}
where, \\
$\Theta=\left( \theta_{1},\theta_{2},\ldots,\theta_{p}\right)$,\\
$\phi\left(\Theta\right)$=  Parametric function of interest,\\
$\log L \left(\Theta\right)$= Log-likelihood function, and\\
$\rho\left(\Theta\right)$= Log of joint prior of $\Theta$, \\

For a sample large and under some conditions, the above equation (\ref{LDR}) can be approximated as
\begin{equation}
\phi^{\star}\left(\Theta\vert\textbf{data}\right) \approx \phi+\frac{1}{2}\sum\limits_{i=1}^{p}\sum\limits_{j=1}^{p}\left(\phi_{ij}+2\phi_{i}\rho_{j}\right)\sigma_{ij}+\frac{1}{2}\sum\limits_{i=1}^{p}\sum\limits_{j=1}^{p} \sum\limits_{k=1}^{p} \sum\limits_{l=1}^{p} L_{ijk}\phi_{l}\sigma_{ij}\sigma_{kl}
\end{equation}
where,
\begin{equation*}
\begin{split}
&p=2,\phi_{1}=\frac{\partial \phi\left( \theta_{1},\theta_{2}\right) }{\partial\theta_{1}},	\phi_{2}=\frac{\partial \phi\left( \theta_{1},\theta_{2}\right)}{\partial\theta_{2}},
\phi_{11}=\frac{\partial^{2}\phi(\theta_{1},\theta_{2})}{\partial\theta_{1}^{2}}, \phi_{12}=\phi_{21}=\frac{\partial^{2}\phi\left( \theta_{1},\theta_{2}\right)}{\partial\theta_{1}\partial\theta_{1}},\\
&\phi_{22}=\frac{\partial^{2}\phi\left( \theta_{1},\theta_{2}\right) }{\partial\theta_{2}^{2}},
L_{111}=\frac{\partial^{3}\log L\left( \theta_{1},\theta_{2}\right) }{\partial\theta_{1}^{3}}, L_{112}=\frac{\partial^{3}\log L\left( \theta_{1},\theta_{2}\right) }{\partial\theta_{1}^{2}\partial\theta_{2}}, 
L_{122}=\frac{\partial^{3}\log L\left( \theta_{1},\theta_{2}\right) }{\partial\theta_{1}\partial\theta_{2}^{2}},\\
&L_{222}=\frac{\partial^{3}\log L\left( \theta_{1},\theta_{2}\right) }{\partial\theta_{2}^{3}},
\rho_{1}=\frac{\partial\log \pi\left( \theta_{1},\theta_{2}\right) }{\partial\theta_{1}},\rho_{2}=\frac{\partial\log \pi\left( \theta_{1},\theta_{2}\right) }{\partial\theta_{2}}\\
&\text{ and }
\begin{pmatrix}
\sigma_{11}& \sigma_{12}\\
\sigma_{21}& \sigma_{22}
\end{pmatrix}
=\begin{pmatrix}
I_{11}& I_{12}\\
I_{21}& I_{22}
\end{pmatrix}^{-1}.
\end{split}
\end{equation*}
The Bayes estimators of stress-strength parameters $\hat{\theta}_{k}$, and $R$ under squared error loss function are given by 
\begin{equation}
\begin{split}
\theta^{*}_{k,self}=&\hat{\theta}_{k, mle}+ \hat{\rho}_{k}\hat{\sigma}_{kk}+ 0.5\left(\hat{L}_{kkk}\hat{\sigma}_{kk}^{2}\right) \\
=&\hat{\theta}_{k,mle}, k=1,~2.
\end{split}
\end{equation} 
\begin{equation}
R^{*}_{self}=\hat{R}+\frac{1}{2}\left[ \hat{\sigma}_{11}\left(\hat{R}_{11}+2\hat{R}_{1}\hat{\rho}_{1}\right)+\hat{\sigma}_{22}\left(\hat{R}_{22}+2\hat{R}_{2}\hat{\rho}_{2}\right)\right]\\+\frac{1}{2}\left[\hat{L}_{111}\hat{R}_{1}\hat{\sigma}_{11}^{2}+\hat{L}_{222}\hat{R}_{2}\hat{\sigma}_{22}^{2}\right]
\end{equation}
and under entropy loss function are defined as 
\begin{equation}
\theta^{*}_{k,elf}=\left[ \frac{1}{\hat{\theta}_{k}}+\frac{\hat{\sigma}_{kk}}{\hat{\theta}_{k}^{2}}\left(\frac{1}{\hat{\theta}_{k}}-\hat{\rho}_{k}\right)-\frac{\hat{L}_{kkk}\hat{\sigma}_{kk}^{2}}{2\hat{\theta}_{k}^{2}}\right] ^{-1}
\end{equation}
\begin{equation}
R^{*}_{elf}  = \left[ \hat{R}^{-1}+\frac{1}{2}\left\lbrace  \hat{\sigma}_{11}\left(\hat{R}_{11}^{-1}+2\hat{R}_{1}^{-1}\hat{\rho}_{1}\right)+\hat{\sigma}_{22}\left(\hat{R}_{22}^{-1}+2\hat{R}_{2}^{-1}\hat{\rho}_{2}\right)\right\rbrace \\+\frac{1}{2}\left(\hat{L}_{111}\hat{R}_{1}^{-1}\hat{\sigma}_{11}^{2}+\hat{L}_{222}\hat{R}_{2}^{-1}\hat{\sigma}_{22}^{2}\right)\right]^{-1} 
\end{equation} 
Note that all terms and derivatives written with a hat $(''~\hat{ }~'')$ are calculated by replacing $\theta_{1}$ and $\theta_{2}$ with their maximum likelihood estimates. To obtain the Bayes estimates under Jeffrey and gamma priors, we take $\rho=\log \pi_{j}\left(\theta_{1},\theta_{2}\right)$ and $\rho=\log \pi_{g}\left(\theta_{1},\theta_{2}\right)$ respectively. The derivatives used under Lindley approximation are shown in Appendix.
\section{Simulation study}
This section consists a simulation study to analysing the behaviour of the proposed estimators based on simulates sample. In simulation, one, using a machine instead of going in the field for data collection, generates the pseudo-random numbers to study the properties of the model, and save time as well as money. Here, we studied the behaviour of the stress-strength reliability on the basis of simulated sample with varying sample sizes and different combinations of the stress-strength parameters.

For this purpose, we need a simulation algorithm for generating random sample from inverse Lindley distribution.
The simplest method used for this purpose is an inversion method that utilizes a probability integral transformation. Though the probability integral transformation under ILD can not be applied explicitly, we, can apply Newton's method (to solve the equation $F(x)-u(0,1)=0$) or a Lambert W function as suggested by \cite{jorda10}. To simulate a random sample from ILD distribution, One can also use a fact that the ILD distribution is a mixture of an inverse exponential ($\theta$) and an inverse gamma (2, $\theta$) distributions. The algorithm is then have the following steps.
\begin{enumerate}[(i)]
\item Generate $U_{i} \sim~ Uniform(0, 1), i = 1,2,\dots, n$.
\item Generate $V_{i} \sim~ Inverse~ Exponential(\theta), i = 1,2,\dots, n$.
\item Generate $W_{i} \sim~ Inverse~ Gamma(2,\theta), i = 1,2,\dots, n$.
\item If $U_{i} \leq p = \theta/\left( \theta + 1\right)$, then set $X_{i} = V_{i}$, otherwise, set $X_{i} = W_{i}, i = 1,2,\ldots, n.$
\end{enumerate} 
For simulation, we took three different combinations of $\left( \theta_{1},\theta_{2}\right)$=(1,2),(1,1),(2,1) such that for which stress-strength reliability results in a value (0.25) small, (0.50) moderate and (0.75) large, respectively. Under gamma prior,  the hyper-parameters ate taken to be zero to make a gamma prior non-informative. The resulting prior is denoted by Gamma 1. As in a simulation study, the true values of the model parameters are known priory, one can use this information to elicit the assumed prior distributions. Here, we,  under this consideration, take prior means equal to the true value of parameters with known lower and higher variances.The prior variance indicates the confidence of our prior guess. A large prior variance shows less confidence in prior guess and resulting prior distribution is relatively flat. Likewise, small prior variance indicates greater confidence in prior guess. Here, we took prior variance 0.5 as small and 10 as large, and called these priors as Gamma 2 and Gamma 3, respectively.

For each generated sample, we calculate the proposed estimators discussed in the previous sections. This process is repeated 5000 times, and average estimates (AV) and mean squared errors (MSE) of the estimators are obtained. The simulation results are summarised in Tables \ref{tab:1}-\ref{tab:7}. Studying the results, we reached to the following concussions:
\begin{enumerate}[C1.]
\item As expected, the mean squared error for all estimators, obtained with different parameters values, decreases as sample sizes for stress-strength random variables increase. 
\item Bayes estimator obtained under non-informative prior is equally applicable as maximum likelihood estimator in estimating the parameters since both show approximately same magnitudes for their mean squared errors.
\item In Bayes estimation in case of both symmetric and asymmetric loss functions, the MSE of Bayes estimator obtained under Jeffrey prior is higher than that of obtained under gamma prior, for given values of parameters $(\theta_{1},\theta_{2})$ and sample sizes $(n, m)$.
\item However, Bayes estimators obtained under informative prior show smaller mean squared error than the maximum likelihood estimators.
\item Under gamma prior, it is observed that the MSE of the Bayes estimators increases with increasing prior variance.
\item In general, the MSEs of $(\theta_{1},\theta_{2})$ increase with increasing magnitude of their true values for given (m,n).
\item Bayes estimators obtained under entropy loss function show smaller MSE than Bayes estimators obtained under squared error loss.
\item The MSE of the estimator of stress-strength reliability for small and large values of R, are greater than that obtained for  moderate value (0.5) of R.
\end{enumerate}
\begin{table}[htbp]
  \centering
  \caption{Average estimates (AV) and mean squared errors (MSE) of the estimator R under self and elf with varying n and m, when stress parameter $\theta_{1}$=1 and strength parameter $\theta_{2}$=2 }
    \begin{tabular}{cccccccccc}
    \toprule
    \multirow{2}[4]{*}{(n,m)} & \multirow{2}[4]{*}{MLE} & \multicolumn{2}{c}{\multirow{2}[4]{*}{Jeffrey Prior}} & \multicolumn{6}{c}{Gamma Prior} \\ \cline{5-10}
          &       & \multicolumn{2}{c}{} & \multicolumn{2}{c}{Gamma 1} & \multicolumn{2}{c}{Gamma 2} & \multicolumn{2}{c}{Gamma 3} \\ \cline{2-10}
          & $\hat{R}_{mle}$  & $R^{\star}_{self}$ & $R^{\star}_{elf}$& $R^{\star}_{self}$&$R^{\star}_{elf}$ & $R^{\star}_{self}$&$R^{\star}_{elf}$  & $R^{\star}_{self}$ & $R^{\star}_{elf}$\\ \midrule
    \multirow{2}[4]{*}{(15,20)} & 0.2212 & 0.2284 & 0.2001 & 0.2280 & 0.1997 & 0.2301 & 0.2018 & 0.2282 & 0.1999 \\
          & 0.0080 & 0.0071 & 0.0103 & 0.0072 & 0.0104 & 0.0062 & 0.0095 & 0.0071 & 0.0103 \\
    \multirow{2}[4]{*}{(20,15)} & 0.3630 & 0.3649 & 0.3196 & 0.3651 & 0.3197 & 0.3617 & 0.3173 & 0.3647 & 0.3195 \\
          & 0.0128 & 0.0128 & 0.0063 & 0.0128 & 0.0064 & 0.0112 & 0.0054 & 0.0126 & 0.0063 \\
    \multirow{2}[4]{*}{(30,20)} & 0.1968 & 0.2023 & 0.1838 & 0.2020 & 0.1834 & 0.2035 & 0.1849 & 0.2021 & 0.1836 \\
          & 0.0100 & 0.0090 & 0.0121 & 0.0091 & 0.0121 & 0.0085 & 0.0116 & 0.0090 & 0.0121 \\
    \multirow{2}[4]{*}{(20,30)} & 0.3923 & 0.3927 & 0.3564 & 0.3929 & 0.3565 & 0.3889 & 0.3533 & 0.3925 & 0.3562 \\
          & 0.0165 & 0.0164 & 0.0092 & 0.0165 & 0.0092 & 0.0150 & 0.0083 & 0.0163 & 0.0091 \\
    \multirow{2}[4]{*}{(50,50)} & 0.2857 & 0.2875 & 0.2742 & 0.2875 & 0.2741 & 0.2875 & 0.2741 & 0.2875 & 0.2741 \\
          & 0.0018 & 0.0018 & 0.0018 & 0.0018 & 0.0018 & 0.0017 & 0.0017 & 0.0018 & 0.0018 \\
    \bottomrule
    \end{tabular}%
  \label{tab:1}%
\end{table}%
\begin{table}[htbp]
  \centering
  \caption{Average estimates (AV) and mean squared errors (MSE) of the estimator R under self and elf with varying n and m, when $\theta_{1}$=$\theta_{2}$=1}
    \begin{tabular}{cccccccccc}
        \toprule
    \multirow{2}[4]{*}{(n,m)} & \multirow{2}[4]{*}{MLE} & \multicolumn{2}{c}{\multirow{2}[4]{*}{Jeffrey Prior}} & \multicolumn{6}{c}{Gamma Prior} \\ \cline{5-10}
          &       & \multicolumn{2}{c}{} & \multicolumn{2}{c}{Gamma 1} & \multicolumn{2}{c}{Gamma 2} & \multicolumn{2}{c}{Gamma 3} \\ \cline{2-10}
          & $\hat{R}_{mle}$  & $R^{\star}_{self}$ & $R^{\star}_{elf}$& $R^{\star}_{self}$&$R^{\star}_{elf}$ & $R^{\star}_{self}$&$R^{\star}_{elf}$  & $R^{\star}_{self}$ & $R^{\star}_{elf}$\\ \midrule
    \multirow{2}[4]{*}{(15,20)} & 0.5833 & 0.5792 & 0.5117 & 0.5795 & 0.5119 & 0.5757 & 0.5090 & 0.5792 & 0.5116 \\
          & 0.0139 & 0.0129 & 0.0058 & 0.0129 & 0.0058 & 0.0118 & 0.0054 & 0.0128 & 0.0058 \\
    \multirow{2}[4]{*}{(20,15)} & 0.4166 & 0.4206 & 0.3775 & 0.4203 & 0.3772 & 0.4224 & 0.3790 & 0.4205 & 0.3774 \\
          & 0.0145 & 0.0134 & 0.0211 & 0.0135 & 0.0212 & 0.0127 & 0.0204 & 0.0134 & 0.0211 \\
    \multirow{2}[4]{*}{(30,20)} & 0.3800 & 0.3841 & 0.3561 & 0.3838 & 0.3559 & 0.3856 & 0.3575 & 0.3840 & 0.3560 \\
          & 0.0194 & 0.0183 & 0.0250 & 0.0183 & 0.0251 & 0.0177 & 0.0245 & 0.0183 & 0.0251 \\
    \multirow{2}[4]{*}{(20,30)} & 0.6159 & 0.6119 & 0.5582 & 0.6122 & 0.5585 & 0.6081 & 0.5551 & 0.6118 & 0.5581 \\
          & 0.0182 & 0.0171 & 0.0076 & 0.0172 & 0.0076 & 0.0160 & 0.0070 & 0.0171 & 0.0075 \\
    \multirow{2}[4]{*}{(50,50)} & 0.4996 & 0.4996 & 0.4789 & 0.4996 & 0.4789 & 0.4996 & 0.4789 & 0.4996 & 0.4789 \\
          & 0.0026 & 0.0026 & 0.0029 & 0.0026 & 0.0029 & 0.0025 & 0.0028 & 0.0026 & 0.0029 \\
    \bottomrule
    \end{tabular}%
  \label{tab:2}%
\end{table}%
\begin{table}[htbp]
  \centering
  \caption{Average estimates (AV) and mean squared errors (MSE) of the estimator R under self and elf with varying n and m, when $\theta_{1}$=2, $\theta_{2}$=1}
    \begin{tabular}{cccccccccc}
    \toprule
    \multirow{2}[4]{*}{(n,m)} & \multirow{2}[4]{*}{MLE} & \multicolumn{2}{c}{\multirow{2}[4]{*}{Jeffrey Prior}} & \multicolumn{6}{c}{Gamma Prior} \\ \cline{5-10}
          &       & \multicolumn{2}{c}{} & \multicolumn{2}{c}{Gamma 1} & \multicolumn{2}{c}{Gamma 2} & \multicolumn{2}{c}{Gamma 3} \\ \cline{2-10}
          & $\hat{R}_{mle}$  & $R^{\star}_{self}$ & $R^{\star}_{elf}$& $R^{\star}_{self}$&$R^{\star}_{elf}$ & $R^{\star}_{self}$&$R^{\star}_{elf}$  & $R^{\star}_{self}$ & $R^{\star}_{elf}$\\ \midrule
    \multirow{2}[4]{*}{(15,20)} & 0.7728 & 0.7656 & 0.7027 & 0.7660 & 0.7031 & 0.7541 & 0.6932 & 0.7648 & 0.7021 \\
          & 0.0070 & 0.0062 & 0.0046 & 0.0062 & 0.0045 & 0.0044 & 0.0042 & 0.0060 & 0.0045 \\
    \multirow{2}[4]{*}{(20,15)} & 0.6424 & 0.6403 & 0.5835 & 0.6401 & 0.5835 & 0.6479 & 0.5900 & 0.6409 & 0.5841 \\
          & 0.0120 & 0.0120 & 0.0234 & 0.0120 & 0.0235 & 0.0100 & 0.0211 & 0.0118 & 0.0232 \\
    \multirow{2}[4]{*}{(30,20)} & 0.6111 & 0.6106 & 0.5711 & 0.6104 & 0.5710 & 0.6178 & 0.5775 & 0.6112 & 0.5716 \\
          & 0.0160 & 0.0160 & 0.0256 & 0.0160 & 0.0256 & 0.0141 & 0.0234 & 0.0158 & 0.0254 \\
    \multirow{2}[4]{*}{(20,30)} & 0.7943 & 0.7888 & 0.7429 & 0.7891 & 0.7432 & 0.7770 & 0.7327 & 0.7879 & 0.7422 \\
          & 0.0085 & 0.0077 & 0.0036 & 0.0077 & 0.0036 & 0.0057 & 0.0028 & 0.0075 & 0.0035 \\
    \multirow{2}[4]{*}{(50,50)} & 0.7146 & 0.7127 & 0.6893 & 0.7128 & 0.6893 & 0.7127 & 0.6893 & 0.7128 & 0.6893 \\
          & 0.0018 & 0.0018 & 0.0025 & 0.0018 & 0.0025 & 0.0017 & 0.0024 & 0.0018 & 0.0025 \\
    \bottomrule
    \end{tabular}%
  \label{tab:3}%
\end{table}%
\begin{table}[htbp]
  \centering
  \caption{Average estimates (AV) and mean squared errors (MSE) of the estimator $\theta_{1}$ under self and elf, when $\theta_{1}$=1, with varying n and m}
    \begin{tabular}{crcccccccc}
    \toprule
    \multirow{2}[4]{*}{(n,m)} & \multirow{2}[4]{*}{MLE} & \multicolumn{2}{c}{\multirow{2}[4]{*}{Jeffrey Prior}} & \multicolumn{6}{c}{Gamma Prior} \\ \cline{5-10}
          &       & \multicolumn{2}{c}{} & \multicolumn{2}{c}{Gamma 1} & \multicolumn{2}{c}{Gamma 2} & \multicolumn{2}{c}{Gamma 3} \\ \cline{2-10}
          & \multicolumn{1}{c}{$\hat{\theta}_{1}$} & $\theta^{\star}_{1,self}$   & $\theta^{\star}_{1,elf}$  &$\theta^{\star}_{1,self}$ & $\theta^{\star}_{1,elf}$   & $\theta^{\star}_{1,self}$&$\theta^{\star}_{1,elf}$   & $\theta^{\star}_{1,self}$ & $\theta^{\star}_{1,elf}$ \\ \midrule
    \multirow{2}[4]{*}{(15,20)} & \multicolumn{1}{c}{1.3360} & 1.3360 & 1.2845 & 1.3404 & 1.2887 & 1.3190 & 1.2692 & 1.3383 & 1.2867 \\
          & \multicolumn{1}{c}{0.1867} & 0.1867 & 0.1482 & 0.1905 & 0.1513 & 0.1648 & 0.1306 & 0.1878 & 0.1491 \\
    \multirow{2}[4]{*}{(20,15)} & \multicolumn{1}{c}{0.8064} & 0.8064 & 0.7845 & 0.8078 & 0.7858 & 0.8115 & 0.7894 & 0.8082 & 0.7862 \\
          & \multicolumn{1}{c}{0.0579} & 0.0579 & 0.0656 & 0.0575 & 0.0651 & 0.0552 & 0.0629 & 0.0572 & 0.0649 \\
    \multirow{2}[4]{*}{(30,20)} & \multicolumn{1}{c}{0.7197} & 0.7197 & 0.7068 & 0.7204 & 0.7075 & 0.7239 & 0.7109 & 0.7208 & 0.7078 \\
          & \multicolumn{1}{c}{0.0889} & 0.0889 & 0.0959 & 0.0885 & 0.0955 & 0.0864 & 0.0934 & 0.0883 & 0.0953 \\
    \multirow{2}[4]{*}{(20,30)} & \multicolumn{1}{c}{1.4666} & 1.4666 & 1.4232 & 1.4705 & 1.4269 & 1.4473 & 1.4054 & 1.4682 & 1.4247 \\
          & \multicolumn{1}{c}{0.2833} & 0.2833 & 0.2402 & 0.2875 & 0.2438 & 0.2574 & 0.2181 & 0.2844 & 0.2411 \\
    \multirow{2}[4]{*}{(50,50)} & \multicolumn{1}{c}{1.0108} & 1.0108 & 0.9994 & 1.0117 & 1.0002 & 1.0114 & 0.9999 & 1.0116 & 1.0002 \\
          & \multicolumn{1}{c}{0.0121} & 0.0121 & 0.0117 & 0.0121 & 0.0117 & 0.0118 & 0.0114 & 0.0121 & 0.0117 \\
    \bottomrule
    \end{tabular}%
  \label{tab:4}%
\end{table}%
\begin{table}[htbp]
  \centering
  \caption{Average estimates (AV) and mean squared errors (MSE) of the estimator $\theta_{1}$ under self and elf, when $\theta_{1}$=2, with varying n and m}
    \begin{tabular}{crcccccccc}
    \toprule
    \multirow{2}[4]{*}{(n,m)} & \multirow{2}[4]{*}{MLE} & \multicolumn{2}{c}{\multirow{2}[4]{*}{Jeffrey Prior}} & \multicolumn{6}{c}{Gamma Prior} \\ \cline{5-10}
          &       & \multicolumn{2}{c}{} & \multicolumn{2}{c}{Gamma 1} & \multicolumn{2}{c}{Gamma 2} & \multicolumn{2}{c}{Gamma 3} \\ \cline{2-10}
          & \multicolumn{1}{c}{$\hat{\theta}_{1}$} & $\theta^{\star}_{1,self}$   & $\theta^{\star}_{1,elf}$  &$\theta^{\star}_{1,self}$ & $\theta^{\star}_{1,elf}$   & $\theta^{\star}_{1,self}$&$\theta^{\star}_{1,elf}$   & $\theta^{\star}_{1,self}$ & $\theta^{\star}_{1,elf}$ \\ \midrule
    \multirow{2}[4]{*}{(15,20)} & \multicolumn{1}{c}{2.6947} & 2.6947 & 2.5770 & 2.7066 & 2.5879 & 2.4967 & 2.4157 & 2.6856 & 2.5689 \\
          & \multicolumn{1}{c}{0.8359} & 0.8359 & 0.6503 & 0.8562 & 0.6662 & 0.3839 & 0.3245 & 0.7997 & 0.6208 \\
    \multirow{2}[4]{*}{(20,15)} & \multicolumn{1}{c}{1.6242} & 1.6242 & 1.5753 & 1.6287 & 1.5795 & 1.6598 & 1.6102 & 1.6318 & 1.5825 \\
          & \multicolumn{1}{c}{0.2334} & 0.2334 & 0.2661 & 0.2307 & 0.2632 & 0.1924 & 0.2233 & 0.2267 & 0.2591 \\
    \multirow{2}[4]{*}{(30,20)} & \multicolumn{1}{c}{1.4482} & 1.4482 & 1.4195 & 1.4508 & 1.4219 & 1.4810 & 1.4517 & 1.4538 & 1.4248 \\
          & \multicolumn{1}{c}{0.3506} & 0.3506 & 0.3810 & 0.3480 & 0.3784 & 0.3122 & 0.3413 & 0.3443 & 0.3747 \\
    \multirow{2}[4]{*}{(20,30)} & \multicolumn{1}{c}{2.9495} & 2.9495 & 2.8503 & 2.9594 & 2.8596 & 2.7372 & 2.6696 & 2.9372 & 2.8389 \\
          & \multicolumn{1}{c}{1.2273} & 1.2273 & 1.0231 & 1.2488 & 1.0412 & 0.6943 & 0.6091 & 1.1855 & 0.9875 \\
    \multirow{2}[4]{*}{(50,50)} & \multicolumn{1}{c}{2.0296} & 2.0296 & 2.0038 & 2.0321 & 2.0062 & 2.0289 & 2.0031 & 2.0318 & 2.0059 \\
          & \multicolumn{1}{c}{0.0546} & 0.0546 & 0.0521 & 0.0550 & 0.0523 & 0.0488 & 0.0466 & 0.0543 & 0.0518 \\
    \bottomrule
    \end{tabular}%
  \label{tab:5}%
\end{table}%
\begin{table}[htbp]
  \centering
  \caption{Average estimates (AV) and mean squared errors (MSE) of the estimator $\theta_{2}$ under self and elf, when $\theta_{2}$=1 , with varying n and m}
    \begin{tabular}{crcccccccc}
    \toprule
   \multirow{2}[4]{*}{(n,m)} & \multirow{2}[4]{*}{MLE} & \multicolumn{2}{c}{\multirow{2}[4]{*}{Jeffrey Prior}} & \multicolumn{6}{c}{Gamma Prior} \\ \cline{5-10}
          &       & \multicolumn{2}{c}{} & \multicolumn{2}{c}{Gamma 1} & \multicolumn{2}{c}{Gamma 2} & \multicolumn{2}{c}{Gamma 3} \\ \cline{2-10}
          & \multicolumn{1}{c}{$\hat{\theta}_{2}$} & $\theta^{\star}_{2,self}$   & $\theta^{\star}_{2,elf}$  &$\theta^{\star}_{2,self}$ & $\theta^{\star}_{2,elf}$   & $\theta^{\star}_{2,self}$&$\theta^{\star}_{2,elf}$   & $\theta^{\star}_{2,self}$ & $\theta^{\star}_{2,elf}$ \\ \midrule
    \multirow{2}[4]{*}{(15,20)} & \multicolumn{1}{c}{1.0291} & 1.0291 & 1.0003 & 1.0312 & 1.0023 & 1.0293 & 1.0005 & 1.0310 & 1.0021 \\
          & \multicolumn{1}{c}{0.0330} & 0.0330 & 0.0301 & 0.0333 & 0.0303 & 0.0310 & 0.0283 & 0.0331 & 0.0301 \\
    \multirow{2}[4]{*}{(20,15)} & \multicolumn{1}{c}{1.0409} & 1.0409 & 1.0023 & 1.0438 & 1.0050 & 1.0402 & 1.0017 & 1.0435 & 1.0047 \\
          & \multicolumn{1}{c}{0.0465} & 0.0465 & 0.0410 & 0.0472 & 0.0415 & 0.0424 & 0.0375 & 0.0467 & 0.0410 \\
    \multirow{2}[4]{*}{(30,20)} & \multicolumn{1}{c}{1.0343} & 1.0343 & 1.0053 & 1.0365 & 1.0074 & 1.0344 & 1.0054 & 1.0363 & 1.0072 \\
          & \multicolumn{1}{c}{0.0344} & 0.0344 & 0.0312 & 0.0348 & 0.0314 & 0.0323 & 0.0293 & 0.0346 & 0.0312 \\
    \multirow{2}[4]{*}{(20,30)} & \multicolumn{1}{c}{1.0236} & 1.0236 & 1.0043 & 1.0250 & 1.0057 & 1.0241 & 1.0048 & 1.0249 & 1.0056 \\
          & \multicolumn{1}{c}{0.0211} & 0.0211 & 0.0197 & 0.0213 & 0.0198 & 0.0203 & 0.0189 & 0.0212 & 0.0197 \\
    \multirow{2}[4]{*}{(50,50)} & \multicolumn{1}{c}{1.0123} & 1.0123 & 1.0008 & 1.0132 & 1.0017 & 1.0128 & 1.0014 & 1.0131 & 1.0016 \\
          & \multicolumn{1}{c}{0.0125} & 0.0125 & 0.0121 & 0.0126 & 0.0121 & 0.0123 & 0.0118 & 0.0125 & 0.0121 \\
    \bottomrule
    \end{tabular}%
  \label{tab:6}%
\end{table}%
\begin{table}[htbp]
  \centering
  \caption{Average estimates (AV) and mean squared errors (MSE) of the estimator $\theta_{2}$ under self and elf, when $\theta_{2}$=2, with varying n and m}
    \begin{tabular}{crcccccccc}
    \toprule
   \multirow{2}[4]{*}{(n,m)} & \multirow{2}[4]{*}{MLE} & \multicolumn{2}{c}{\multirow{2}[4]{*}{Jeffrey Prior}} & \multicolumn{6}{c}{Gamma Prior} \\ \cline{5-10}
          &       & \multicolumn{2}{c}{} & \multicolumn{2}{c}{Gamma 1} & \multicolumn{2}{c}{Gamma 2} & \multicolumn{2}{c}{Gamma 3} \\ \cline{2-10}
          & \multicolumn{1}{c}{$\hat{\theta}_{2}$} & $\theta^{\star}_{2,self}$   & $\theta^{\star}_{2,elf}$  &$\theta^{\star}_{2,self}$ & $\theta^{\star}_{2,elf}$   & $\theta^{\star}_{2,self}$&$\theta^{\star}_{2,elf}$   & $\theta^{\star}_{2,self}$ & $\theta^{\star}_{2,elf}$ \\ \midrule
    \multirow{2}[4]{*}{(15,20)} & \multicolumn{1}{c}{2.1147} & 2.1147 & 2.0264 & 2.1235 & 2.0345 & 2.0787 & 1.9979 & 2.1190 & 2.0304 \\
          & \multicolumn{1}{c}{0.2382} & 0.2382 & 0.2038 & 0.2427 & 0.2064 & 0.1366 & 0.1255 & 0.2305 & 0.1965 \\
    \multirow{2}[4]{*}{(20,15)} & \multicolumn{1}{c}{2.0723} & 2.0723 & 2.0071 & 2.0787 & 2.0131 & 2.0571 & 1.9944 & 2.0766 & 2.0111 \\
          & \multicolumn{1}{c}{0.1571} & 0.1571 & 0.1407 & 0.1593 & 0.1419 & 0.1095 & 0.1010 & 0.1538 & 0.1373 \\
    \multirow{2}[4]{*}{(30,20)} & \multicolumn{1}{c}{2.0787} & 2.0787 & 2.0132 & 2.0852 & 2.0193 & 2.0622 & 1.9994 & 2.0829 & 2.0172 \\
          & \multicolumn{1}{c}{0.1636} & 0.1636 & 0.1460 & 0.1660 & 0.1473 & 0.1132 & 0.1041 & 0.1602 & 0.1424 \\
    \multirow{2}[4]{*}{(20,30)} & \multicolumn{1}{c}{2.0465} & 2.0465 & 2.0033 & 2.0508 & 2.0073 & 2.0417 & 1.9991 & 2.0498 & 2.0065 \\
          & \multicolumn{1}{c}{0.0984} & 0.0984 & 0.0914 & 0.0993 & 0.0920 & 0.0800 & 0.0749 & 0.0973 & 0.0902 \\
    \multirow{2}[4]{*}{(50,50)} & \multicolumn{1}{c}{2.0262} & 2.0262 & 2.0004 & 2.0287 & 2.0028 & 2.0256 & 1.9999 & 2.0284 & 2.0025 \\
          & \multicolumn{1}{c}{0.0555} & 0.0555 & 0.0531 & 0.0558 & 0.0533 & 0.0496 & 0.0475 & 0.0551 & 0.0527 \\
    \bottomrule
    \end{tabular}%
  \label{tab:7}%
\end{table}%
\section{Real data application}
In this section, we considered two real data sets initially reported by \cite{Efron88}, to demonstrate that the methodologies proposed in the previous sections, can be used in practice. The data sets represent the survival times of two groups of patients suffering from Head and Neck cancer disease. The patients in one group were treated using radiotherapy (RT) whereas the patients belonging to other group were treated using a combined radiotherapy and chemotherapy (CT+RT). \cite{Makkar14} showed how a lifetime model with upside-down bathtub hazard rate is useful to modelling such problem and analysed these data sets under Bayesian paradigm using log-normal lifetime model. The data sets are as follows:
\begin{align*}
\textbf{Data (X):~} &6.53,7,10.42,14.48,16.10,22.70,34,41.55,42,45.28,49.40,53.62,63,64,83,84,91,108,112,\\
&129,133,133,139,140,140,146,149,154,157,160,160,165,146,149,154,157,160,160,165,\\
&173,176,218,225,241,248,273,277,297,405,417,420,440,523,583,594,1101,1146,1417	
\end{align*}
\begin{align*}
\textbf{Data (Y):~}&12.20,23.56,23.74,25.87,31.98,37,41.35,47.38,55.46,58.36,63.47,68.46,78.26,74.47,\\
&81.43,84,92,94,110,112,119,127,130,133,140,146,155,159,173,179,194,195,209, \\
&249,281,319,339,432,469,519,633,725,817,1776
\end{align*}
\begin{figure}
\centering
\includegraphics[height=10cm,width=10cm]{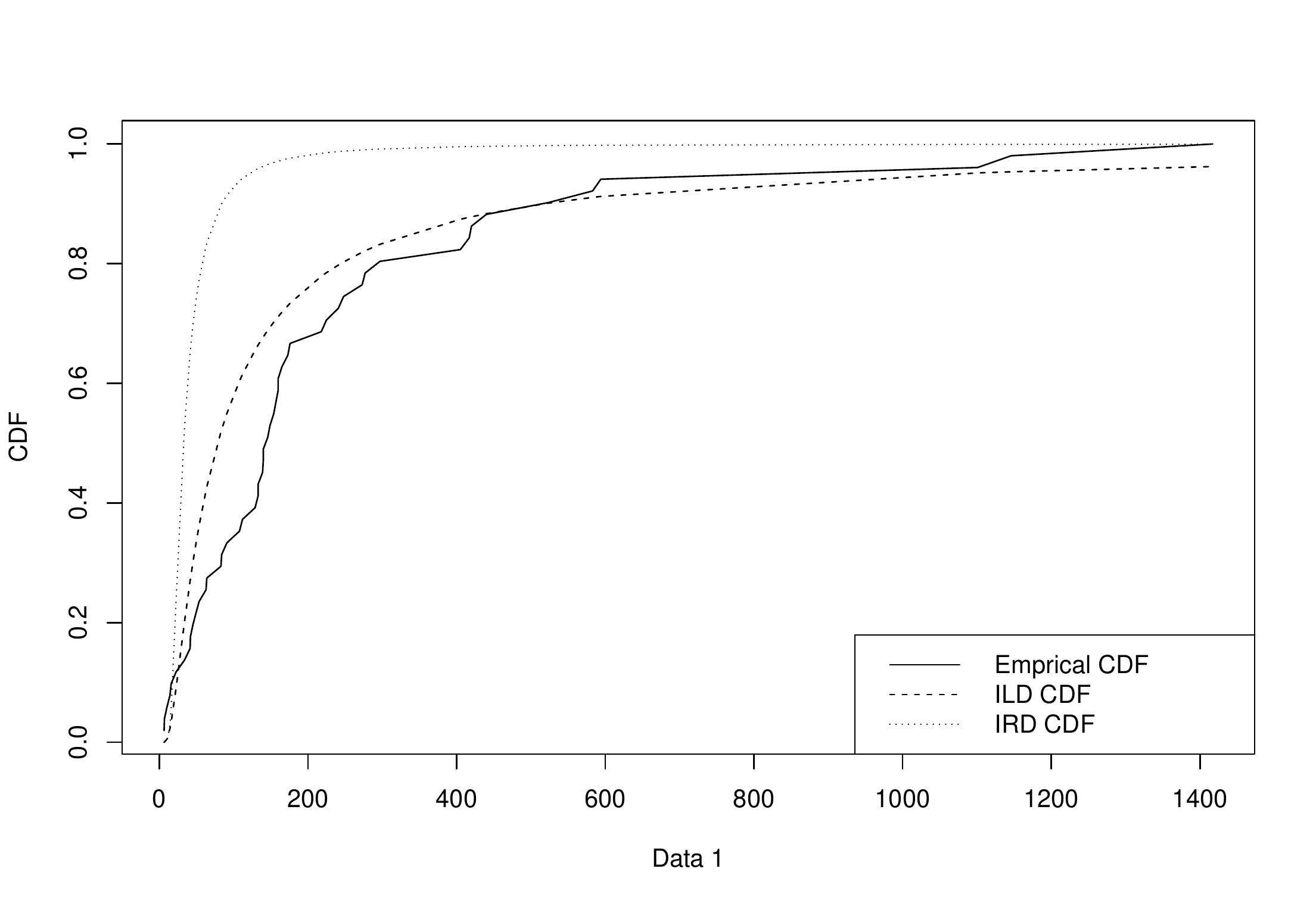}\\
\caption{Empirical and fitted cumulative density function for Data 1}
\label{fig3}
\end{figure}
\begin{figure}
\centering
\includegraphics[height=10cm,width=10cm]{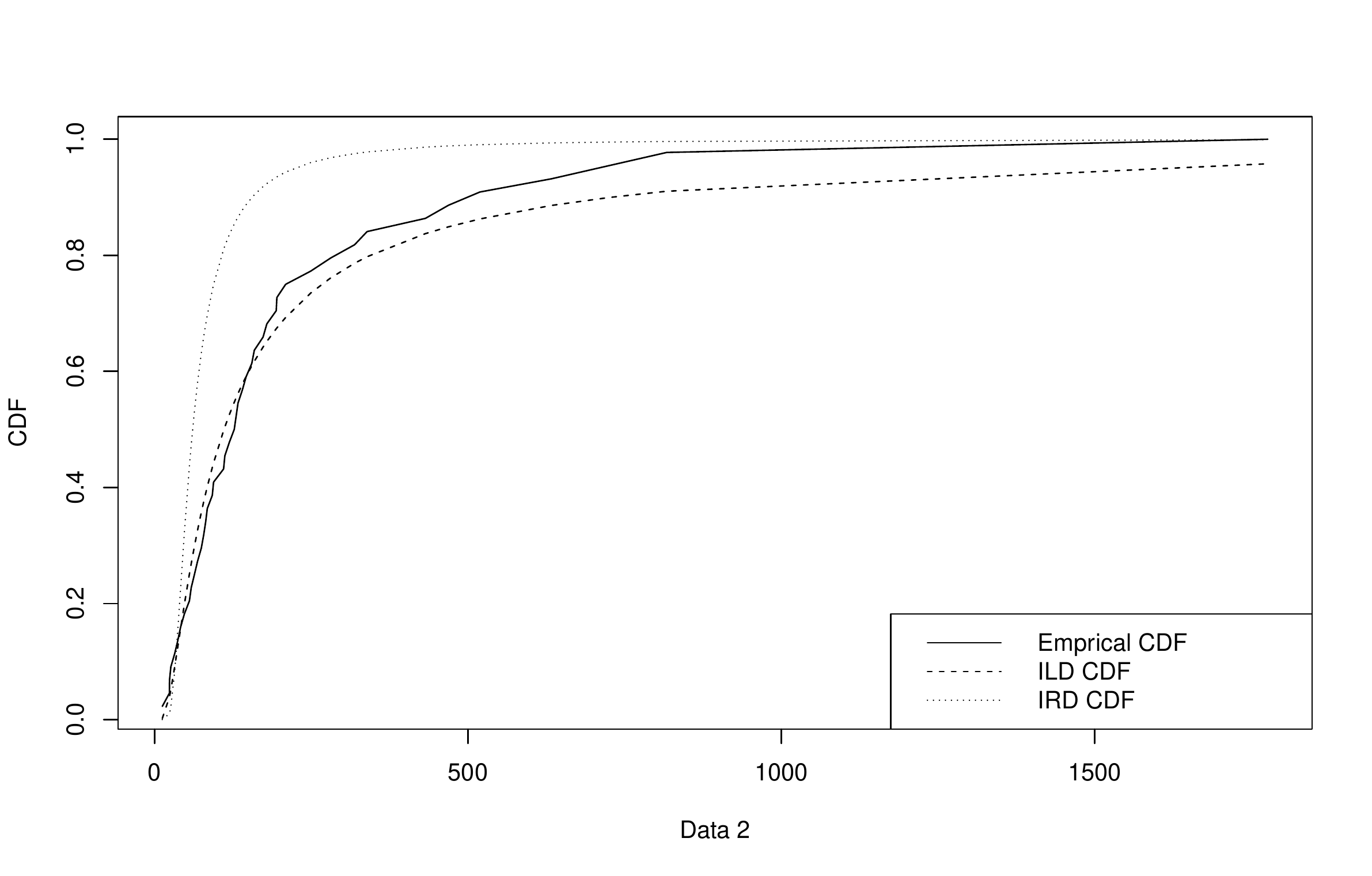}\\
\caption{Empirical and fitted cumulative density function for Data  2}
\label{fig4}
\end{figure}
First we checked the validity of the inverse Lindley distribution for given data sets by using Kologorov-Smirnov goodness-of-fit test(K-S), Akaike information criterion (AIC) and Bayes information criterion (BIC).

We compared the applicability of inverse Lindely distribution with a competing one parameter distribution, inverse Rayleigh distribution(IRD). Based on real data sets, all statistics mentioned above are computed and shown for both distributions in Table \ref{tab:8}. Figures (\ref{fig3}) and (\ref{fig4}) show the fitted and empirical Cdfs for data set 1 and 2 respectively. Clearly, the inverse Lindley distribution fits well to the data sets in comparison to inverse Rayleigh distribution.

The maximum likelihood and various Bayes estimates of $\theta_{1},~ \theta_{2}$ and $R$ are summarised in Table \ref{tab:9}. Since, we dint have any prior information besides few observations, we used only non-informative priors, Jeffrey and Gamma 1 priors in real data study.
\begin{table}[htbp]
  \centering
  \caption{The model fitting summary for both the data sets.}
    \begin{tabular}{cclcccc}
    \toprule
    \multicolumn{1}{c}{} & \multicolumn{1}{c}{Model} & \multicolumn{1}{c}{MLE} & \multicolumn{1}{c}{Log-Likelihood} & \multicolumn{1}{c}{AIC} & \multicolumn{1}{c}{BIC} & \multicolumn{1}{c}{K-S} \\
        \midrule
    \multicolumn{1}{c}{\multirow{2}[4]{*}{Data 1}} & \multicolumn{1}{c}{ILD} & \multicolumn{1}{c}{55.4540} & \multicolumn{1}{c}{-340.7515} & \multicolumn{1}{c}{683.5029} & \multicolumn{1}{c}{685.4347} & \multicolumn{1}{c}{0.2634} \\
    \multicolumn{1}{c}{} & \multicolumn{1}{c}{IRD} & \multicolumn{1}{c}{741.3652} & \multicolumn{1}{c}{-419.0671} & \multicolumn{1}{c}{840.1341} & \multicolumn{1}{c}{842.0660} & \multicolumn{1}{c}{0.6039} \\
            \midrule
    \multicolumn{1}{c}{\multirow{2}[4]{*}{Data 2}} & \multicolumn{1}{c}{ILD} & \multicolumn{1}{c}{77.6755} & \multicolumn{1}{c}{-344.1048} & \multicolumn{1}{c}{690.2096} & \multicolumn{1}{c}{691.9938} & \multicolumn{1}{c}{0.0799} \\
    \multicolumn{1}{c}{} & \multicolumn{1}{c}{IRD} & \multicolumn{1}{c}{2547.4170} & \multicolumn{1}{c}{-480.3576} & \multicolumn{1}{c}{962.7151} & \multicolumn{1}{c}{964.4993} & \multicolumn{1}{c}{0.3783} \\
    \bottomrule
    \end{tabular}%
  \label{tab:8}%
\end{table}%
\begin{table}[htbp]
  \centering
  \caption{The maximum likelihood and Bayes estimates of $\theta_{1}, \theta_{2}$ and $R$ based on real data sets.}
    \begin{tabular}{cccccc} 
    \toprule
    \multirow{2}[4]{*}{Parameter} & \multirow{2}[4]{*}{MLE} & \multicolumn{2}{c}{Jeffery} & \multicolumn{2}{c}{Gamma 1} \\ \cline{3-6}
          &       & SELF  & ELF   & SELF  & ELF \\     \midrule
    R     & 0.5847 & 0.5834 & 0.5737 & 0.5834 & 0.5736 \\
    $\theta_{1}$    & 77.6755 & 77.6755 & 75.9910 & 77.6963 & 76.0109 \\
    $\theta_{2}$     & 55.4540 & 55.4540 & 54.4230 & 55.4713 & 54.4398 \\
    \bottomrule
    \end{tabular}%
  \label{tab:9}%
\end{table}%
\section{Conclusion}
In this paper, we purposed an inverse Lindley distribution and studied its mathematical and statistical properties such as quantile, mode, stochastic ordering. The flexibility of the propose model has been demonstrated by showing the behaviours of the Pdf and hazard functions mathematically as well as graphically. It has been found that the inverse Lindley distribution fits quit well the data of survival of Head and Neck cancer patients. Further, we addressed the problem of estimating the stress-strength reliability $R$=$P[Y<X]$ where $X$ and $Y$ follow the inverse Lindley distributions. The estimation of the parameters and stress-strength reliability $R$ have been approached by both classical and Bayesian methods. We could use the asymptotic distribution of $R$ to construct confidence intervals. We computed the Bayes estimators of $R$ based on the Jeffrey and gamma priors using a squared error loss and an entropy loss functions. Since the Bayes estimators cannot be obtained in explicit form, we proposed the use of the Lindley's approximation to compute the Bayes estimates. By simulation study, we concluded that Bayes estimators obtained under informative priors perform well and have smaller mean squared error as compared to that of obtained under the non-informative priors, and maximum likelihood estimators. We could also observed that entropy loss function is a suitable loss function than squared error loss. Further extensions of this distribution are under progress and evolved versions having effective shape parameter(s) will be presented in the literature very soon. Finally, we would recommend the use of inverse Lindely distribution as a reliability and lifetime model to modelling the real problems encountered in engineering, medical science, biological science and other applied sciences.
\section*{Appendix}
\begin{equation*}
\begin{split}
&L_{111}=\frac{4n}{\theta_{1}^{3}}-\frac{2n}{\left(1+\theta_{1}\right)^{3}};L_{222}=\frac{4m}{\theta_{2}^{3}}-\frac{2m}{\left(1+\theta_{2}\right)^{3}};L_{112}=L_{122}=0\\
& \begin{pmatrix}
\sigma_{11}& \sigma_{12}\\
\sigma_{21}& \sigma_{22}
\end{pmatrix}
=-\begin{pmatrix}
-\frac{2n}{\theta_{2}^{2}}+\frac{n}{\left(1+\theta_{2}\right)^{2}}& 0 \\
0 & -\frac{2m}{\theta_{2}^{2}}+\frac{m}{\left(1+\theta_{2}\right)^{2}}
\end{pmatrix}^{-1}\\
&\rho_{1}=\frac{\theta_{1}+2}{\theta_{1}^{2}+4\theta_{1}+2}-\frac{1}{\theta_{1}}-\frac{1}{\left(1+\theta_{1}\right)}, \rho_{2}=\frac{\theta_{2}+2}{\theta_{2}^{2}+4\theta_{2}+2}-\frac{1}{\theta_{2}}-\frac{1}{\left(1+\theta_{2}\right)}\\
&R=\frac{\theta_{1}^{2}\left[ \left(\theta_{1}+\theta_{2} \right)^{2}\left(1+\theta_{2} \right)+\left( \theta_{1}+\theta_{2}\right)\left( 1+2\theta_{2}\right)+2\theta_{2} \right] }{\left(1+\theta_{1} \right)\left( 1+\theta_{2}\right)\left( \theta_{1}+\theta_{2}\right)^{3}}=\frac{\delta}{\lambda}\\
&R_{1}=\frac{\lambda\delta_{1}-\delta\lambda_{1}}{\lambda^{2}}, R_{2}=\frac{\lambda\delta_{2}-\delta\lambda_{2}}{\lambda^{2}}, R_{11}=\frac{\lambda^{2}\left(\lambda\delta_{11}-\delta\lambda_{11}\right)-2\lambda\lambda_{1}\left(\lambda\delta_{1}-\delta\lambda_{1} \right)}{\lambda^{4}}\\
&R_{22}=\frac{\lambda^{2}\left(\lambda\delta_{22}-\delta\lambda_{22}\right)-2\lambda\lambda_{2}\left(\lambda\delta_{2}-\delta\lambda_{2} \right)}{\lambda^{4}}, R_{1}^{-1}=\frac{\delta\lambda_{1}-\lambda\delta_{1}}{\delta^{2}}, R_{2}^{-1}=\frac{\delta\lambda_{2}-\lambda\delta_{2}}{\delta^{2}}\\
&R_{11}^{-1}=\frac{\delta^{2}\left(\delta\lambda_{11}-\lambda\delta_{11}\right)-2\delta\delta_{1}\left(\delta\lambda_{1}-\lambda\delta_{1} \right)}{\delta^{4}}, R_{22}^{-1}=\frac{\delta^{2}\left(\delta\lambda_{22}-\lambda\delta_{22}\right)-2\delta\delta_{2}\left(\delta\lambda_{2}-\lambda\delta_{2} \right)}{\delta^{4}}\\
\end{split}
\end{equation*}
\begin{equation*}
\begin{split}
&\lambda= \left(1+\theta_{1} \right)\left( 1+\theta_{2}\right)\left( \theta_{1}+\theta_{2}\right)^{3}\\
&\delta=\theta_{1}^{2}\left[ \left(\theta_{1}+\theta_{2} \right)^{2}\left(1+\theta_{2} \right)+\left( \theta_{1}+\theta_{2}\right)\left( 1+2\theta_{2}\right)+2\theta_{2} \right]\\
&\lambda_{1}= \left( 1+\theta_{2}\right)\left( \theta_{1}+\theta_{2}\right)^{3}+3\left(1+\theta_{1} \right)\left( 1+\theta_{2}\right)\left( \theta_{1}+\theta_{2}\right)^{2}\\
&\lambda_{2}= \left( 1+\theta_{1}\right)\left( \theta_{1}+\theta_{2}\right)^{3}+3\left(1+\theta_{1} \right)\left( 1+\theta_{2}\right)\left( \theta_{1}+\theta_{2}\right)^{2}\\
&\delta_{1}=2\theta_{1}\left[ \left(\theta_{1}+\theta_{2} \right)^{2}\left(1+\theta_{2} \right)+\left( \theta_{1}+\theta_{2}\right)\left( 1+2\theta_{2}\right)+2\theta_{2} \right]+\theta_{1}^{2}\left[ 2\left(\theta_{1}+\theta_{2} \right)\left(1+\theta_{2}\right)+\left( 1+2\theta_{2}\right) \right]\\
&\delta_{2}=\theta_{1}^{2}\left[\left(\theta_{1}+\theta_{2} \right)^{2}+2\left( \theta_{1}+\theta_{2}\right)\left( 1+\theta_{2}\right)+2\left(\theta_{1}+\theta_{2} \right)+\left( 1+2\theta_{2}\right)+2\right]\\
&\lambda_{11}=6\left( 1+\theta_{2}\right)\left( \theta_{1}+\theta_{2}\right)\left(2\theta_{1}+\theta_{2}+1 \right) \\
&\lambda_{22}=6\left( 1+\theta_{1}\right)\left( \theta_{1}+\theta_{2}\right)\left(2\theta_{2}+\theta_{1}+1 \right) \\
&\delta_{11}=2\left(\theta_{1}+\theta_{2}\right)\left(1+\theta_{2}\right)\left(5\theta_{1}+\theta_{2}\right)+2\left( 3\theta_{1}+\theta_{2}\right)\left( 1+2\theta_{2}\right)+2\theta_{2}(\theta_{1}^{2}+2)+2\theta_{1}^{2}\\
&\delta_{22}=\theta_{1}^{2}\left[ 2\left(1+\theta_{2}\right)+ 4\left( \theta_{1}+\theta_{2}\right)+ 4\right]
\end{split}
\end{equation*}
\bibliography{REFERENCES}
\end{document}